\newtheorem{theorem}{Theorem}
\newtheorem{lemma}{Lemma}
\newtheorem{proposition}{Proposition}
\newenvironment{proof}
      {\medskip\noindent{\bf Proof:}\hspace{1mm}}
      {\hfill$\Box$\medskip}
\def\qed{\ifvmode\mbox{ }\else\unskip\fi\hskip 1em plus 10fill$\Box$}
\newenvironment{proofof}[1]
      {\medskip\noindent{\bf Proof of #1:}\hspace{1mm}}
      {\hfill$\Box$\medskip}
\def\Ddots{\mathinner{\mkern1mu\raise\p@
\vbox{\kern7\p@\hbox{.}}\mkern2mu
\raise4\p@\hbox{.}\mkern2mu\raise7\p@\hbox{.}\mkern1mu}}
\title{\vspace{-0.7cm} On the trifference problem for linear codes}
\author{Cosmin Pohoata\thanks{Department of Mathematics, Yale University, USA. Email: {\tt andrei.pohoata@yale.edu}.} \and Dmitriy Zakharov\thanks{Laboratory of Combinatorial and Geometric Structures, MIPT, Russia. Email: {\tt  zakharov2k@gmail.com}.}}
\date{}
\begin{document}
\maketitle

\begin{abstract}
We prove that perfect $3$-hash linear codes in $\mathbb{F}_{3}^{n}$ must have dimension at most $ \left(\frac{1}{4}-\epsilon\right)n$ for some absolute constant $\epsilon > 0$.
\end{abstract}

\section{Introduction}

A code of length $n$ over an alphabet of size $k \geq 3$ is a subset $\mathcal{F}$ of $\left\{0,1,\ldots,k-1\right\}^n$. Such a code is called a {\textit{perfect $k$-hash code}} if for every subset of $k$ distinct elements (or {\textit{codewords}}) of $\mathcal{F}$, say $\left\{c^{(1)},\ldots,c^{(k)}\right\}$, there exists a coordinate $i$ such that all these elements differ in this coordonate, namely $\left\{c^{(1)}_{i},\ldots,c^{(k)}_{i}\right\}=\left\{0,\ldots,k-1\right\}$. The problem of finding upper bounds for the maximum size of perfect $k$-hash codes is a fundamental problem in theoretical computer science, with a long history. The original benchmark result comes from an elementary double counting argument, which gives the bound
\begin{equation} \label{korner}
|\mathcal{F}| \leq (k-1) \cdot \left(\frac{k}{k-1}\right)^{n},
\end{equation}
for all $k \geq 3$. In 1973, K\H{o}rner \cite{Ko73} raised the beautiful problem of possibly improving upon this simple and yet surprisingly strong inequality (which for example defies all known algebraic techniques, including the recent slice-rank method of Tao \cite{Tao}; see for example \cite{CD19} for a discussion). In 1984, this was famously (partially) achieved by Fredman and Koml\'os, who managed to improve on this bound for all $k \geq 4$ (and for sufficiently large $n$), by showing that 
\begin{equation} \label{FK}
|\mathcal{F}| \leq \left(2^{k!/k^{k-1}}\right)^{n}.
\end{equation}
Their argument also carried through without any issues for $k=3$, however, in this case note that $2^{2/3} \approx 1.5874$, so the upper bound from \eqref{FK} provided a much weaker inequality than $|\mathcal{F}| \leq 2 \cdot (3/2)^n$. Consequently, short after, the problem of breaking the $3/2$ barrier for perfect $3$-hash codes in $\left\{0,1,2\right\}^n$ quickly established itself as one of the most tantalizing problems in information theory and became known as the so-called Trifference Problem of K\H{o}rner (perfect $3$-hash codes are also called {\textit{trifferent sets}}). 

Over the years, there have been multiple interesting refinements of the Fredman-Koml\'os method which led to various slight improvements of the inequality from \eqref{FK} for $k \geq 4$. See for example \cite{DGR17} and \cite{CD20} for some recent progress on the cases $k=4$ and $k\geq 5$, respectively. We also encourage the interested reader to consult \cite{GR19}, which is the paper that \cite{CD20} builds upon and which also contains a wonderful comprehensive discussion of the history of the problem and of the Fredman-Koml\'os approach. That being said, despite all this activity, no progress of any kind has ever been made for the case $k=3$, where the best known upper bound remained the stubborn inequality from \eqref{korner}. 

%The {\textit{rate of the code}} $\mathcal{F}$ (in bits) is typically defined as $R = \log_{2}{|\mathcal{F}|}/n$. For fixed positive integer $k$, 

%Such sets are called {\textit{perfect $3$-hash codes}} or {\textit{trifferent sets}}. 

In this paper, we prove threshold-breaking bounds for perfect $3$-hash {\textit{linear}} codes. We say that $V \subset \mathbb{F}_{3}^{n}$ is a perfect $3$-hash linear code if $V$ is a subspace of $\mathbb{F}_{3}^{n}$ which has the trifference property that for each triplet of distinct vectors $x,y,z$ in $V$ there exists some coordinate $i$ where $\left\{x_i,y_i,z_i\right\}=\left\{0,1,2\right\}$. 

\begin{theorem} \label{main2}
There exists a positive integer $n_{0}$ such that for every $n \geq n_{0}$, every perfect $3$-hash linear code in $\mathbb{F}_{3}^{n}$ has dimension at most $\left(\frac{1}{4}-\epsilon\right)n$, for some absolute constant $\epsilon > 0$.
\end{theorem}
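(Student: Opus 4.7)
The plan is in three stages: reformulate trifference as a linear-algebraic spanning condition on the columns of a generator matrix, obtain a first bound by double counting, and close the remaining gap through a finite-geometric obstruction. Write $V = \{xG : x \in \mathbb{F}_3^k\}$ where $G$ has columns $c_1,\dots,c_n \in \mathbb{F}_3^k$. Since trifference is translation-invariant on $V$, it suffices to check it on triples $\{0,u,v\}$ with $u,v \in V$ linearly independent. A case analysis on the $9$-point plane $W = \langle u,v\rangle$, classifying $3$-subsets as affine-line triples or ``general-position'' triples $\{a,b,c\}$ (the latter trifferent at coordinate $i$ iff the functional $\phi_i(x) = \langle x, c_i\rangle$ restricted to $W$ has kernel exactly $\langle a+b+c\rangle$), reduces the trifference condition on $W$ to: each of the $4$ lines $L \subset W$ through the origin must equal $\ker(\phi_i|_W)$ for some coordinate $i$. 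Varying $W$, this globalizes to the \emph{spanning condition}: for every nonzero $x \in \mathbb{F}_3^k$, the columns of $G$ lying in the hyperplane $x^\perp$ linearly span $x^\perp$.

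The spanning condition alone yields $n \geq 3(k-1) + O(1)$ by a direct double count: each of the $(3^k-1)/2$ hyperplanes needs at least $k-1$ spanning columns, while each column lies in exactly $(3^{k-1}-1)/2$ hyperplanes. This already strictly improves on the Fredman-Koml\'os bound $k \leq n\log_3(3/2)$ in the linear setting. Moreover, since $c_i \in x^\perp$ is equivalent to $(xG)_i = 0$, the spanning condition forces every nonzero codeword $v = xG$ to have Hamming weight $w(v) \leq n-k+1$, so $V$ is contained in the union of the $n$ coordinate hyperplanes of $\mathbb{F}_3^n$.

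Pushing the bound from $k \leq n/3$ to $k \leq (1/4-\epsilon)n$ requires analyzing the near-tight regime of the double count. When $n \approx 3k$, every hyperplane $x^\perp$ must contain essentially just a basis of $x^\perp$ among the columns, so the projective complement $D \subset PG(k-1,3)$ of the column multiset avoids many points on every hyperplane, making $D$ a cap-like structure. Standard upper bounds on caps in $PG(k-1,3)$ then force $|D|$ to be too small, yielding a contradiction. Quantifying this obstruction via an averaging argument on the spanning slack $\delta(x) := |\{i : c_i \in x^\perp\}| - (k-1)$, combined with a cap-set style estimate on $D$, should produce $n \geq (4+\epsilon')k$, i.e., $k \leq (1/4-\epsilon)n$.

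The main obstacle is the last quantitative step. The factor-of-$4/3$ improvement from $3k$ to $4k$ is not visible from first-moment counting, so extracting it will require either a cap-set bound in $PG(k-1,3)$ that is stronger (or more geometric) than the Ellenberg-Gijswijt one, or a joint structural argument exploiting the interaction between the spanning conditions for different $x$ and the codeword weight bound $w(v) \leq n-k+1$. Pinning down the explicit $\epsilon > 0$ will be a delicate numerical balance of these estimates.
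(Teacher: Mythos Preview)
Your reformulation is correct and in fact equivalent to the paper's: identifying $\mathbb F_3^k$ with $V^*$ via the generator matrix, your ``spanning condition'' (for every nonzero $x$, the columns in $x^\perp$ span $x^\perp$) is exactly the dual blocking statement the paper extracts, namely that the centrally symmetric set $X=\{\pm c_1,\ldots,\pm c_n\}\setminus\{0\}$ meets $H_1\cap H_2$ for every pair of non-parallel affine hyperplanes not through $0$. Your max-weight observation $w(v)\le n-k+1$ is also correct. So the setup is fine; the gap is entirely in the quantitative step.

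Your double count only yields $n\ge 3(k-1)$, whereas the theorem needs $n\ge(4+\epsilon')k$. You acknowledge that the jump from $3$ to $4$ is not visible from first-moment counting and propose to close it with a cap-set obstruction on the projective complement $D$ of the column set. This is where the plan breaks down. In the near-tight regime $n\approx 3k$ the column multiset has only $O(k)$ points, so $D$ consists of essentially all of $PG(k-1,3)$; on each hyperplane $D$ occupies all but about $k-1$ points. That is the opposite of a cap: $D$ is enormous and highly collinear, and no cap-set bound (Ellenberg--Gijswijt or otherwise) constrains a set of size $\sim 3^{k}/2$. There is no mechanism in your outline that produces even the intermediate bound $n\ge 4k+O(1)$, let alone the extra $\epsilon$.

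The paper's route is genuinely different from what you propose. To get $n\ge 4k+O(1)$ it introduces the auxiliary quantity $m(n,d)$ (the guaranteed number of points of a centrally symmetric $n$-set on some \emph{linear} hyperplane), proves via the Brouwer--Schrijver affine-blocking bound (Combinatorial Nullstellensatz: any $2d$ points of $\mathbb F_3^d$ miss some affine hyperplane) that $m(n,d)\ge n-4d+4$ implies $f(d)>n$, and then lower-bounds $m(n,d)$ by a tailored probabilistic choice of hyperplane $H_{i,j}=\{v_i=v_j\}$ after fixing a basis inside $X$. This already gives $f(d)\ge 8d-22$, i.e.\ $n\ge 4k-O(1)$. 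The final $\epsilon$-improvement is obtained by iterating: restrict $X$ to an extremal linear hyperplane, reinterpret the restricted configuration as the dual of a code with large minimum distance, and apply a Hamming-type packing inequality $\binom{n'}{k}2^{k}\le 3^{n'-d}$; a short numerical computation then shows the packing inequality fails when $n$ is just below $(4+\epsilon)d$. None of these ingredients --- the affine-blocking lemma, the $H_{i,j}$ averaging, or the packing step --- appears in your outline, and your cap-set substitute does not do the work they do.
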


Linear perfect $3$-hash codes are not only of independent interest but K\H{o}rner and Marton also use them to construct the largest known examples of trifferent sets. These are sets of size $(9/5)^{n/4} \approx 1.158^n$ and rely on the existence of $2$-dimensional linear perfect $3$-hash codes in $\mathbb{F}_{3}^{4}$. See \cite{KM88} for more details.

To prove Theorem \ref{main2}, we use a mix of probabilistic and algebraic arguments, together with a series of combinatorial reductions. Some of these reductions have led us to some interesting separate problems that we decided to single out, as they may be interesting for independent reasons. 

\section{Proof of Theorem \ref{main2}}

To set things up, we shall first prove the following preliminary bound, which already goes below the $3/2$ barrier from \eqref{korner} for linear codes.

\begin{proposition}\label{lin}
Let $V \subset \mathbb F_3^n$ be a $d$-dimensional $3$-hash code. Then, 
$$d \le \frac{n+11}{4}.$$
\end{proposition}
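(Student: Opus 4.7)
The plan is to first translate trifference into a clean geometric condition on the columns of a generator matrix of $V$, and then set up an induction on the dimension $d$.

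\textbf{Reformulation.} Let $G\in \F_3^{d\times n}$ be a generator matrix of $V$ with columns $g_1,\dots,g_n\in \F_3^d$, so that every codeword has the form $xG$ with $i$-th coordinate $\langle x,g_i\rangle$. We may assume each $g_i\ne 0$. Since $V$ is a subgroup of $\F_3^n$, any triple of distinct codewords translates to $\{0,x,y\}$ with $x,y\in \F_3^d\setminus\{0\}$, $x\ne y$, and the trifference condition at coordinate $i$ reads $\{0,\langle x,g_i\rangle,\langle y,g_i\rangle\}=\{0,1,2\}$. In $\F_3$, this is equivalent to $\langle x+y,g_i\rangle=0$ while $\langle x,g_i\rangle\ne 0$, i.e.\ $g_i\in(x+y)^\perp\setminus x^\perp$. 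Sweeping over all valid pairs $(x,y)$ and unwinding the annihilators yields the clean characterization: $V$ is trifferent if and only if, for every hyperplane $H\subset \F_3^d$, the columns lying in $H$ span $H$. Equivalently, for every codimension-$2$ subspace $H'\subset \F_3^d$, the image of $\{g_i\}$ in the quotient $\F_3^d/H'\cong \F_3^2$ meets each of the four $1$-dimensional subspaces.

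\textbf{Inductive shortening.} I would proceed by induction on $d$, the base cases $d\le 3$ being immediate from $n\ge d$. For the inductive step, suppose that some $k$-dimensional subspace $U\subset \F_3^d$ with $1\le k\le d-1$ contains at least $4k$ of the columns. The subcode $V_U:=\{xG:x\in U^\perp\}$ has dimension $d-k$ and is itself trifferent as a subcode of $V$. Every coordinate $i$ with $g_i\in U$ is identically zero on $V_U$, because $\langle x,g_i\rangle=0$ whenever $x\in U^\perp$ and $g_i\in U$. Deleting those at least $4k$ zero coordinates, $V_U$ embeds as a $(d-k)$-dimensional trifferent code in $\F_3^{n-m}$ for some $m\ge 4k$. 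By the induction hypothesis, $d-k\le(n-m+11)/4\le(n-4k+11)/4$, which rearranges to $d\le(n+11)/4$.

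\textbf{The crux.} The proposition thus reduces to guaranteeing that such a shortening subspace $U$ exists. The simplest case is $k=1$: if the column multiset has some projective point of multiplicity at least $4$, we are done. Otherwise every line carries at most $3$ columns, and I would next look for a $2$-dimensional subspace containing at least $8$ columns, then a $3$-dimensional one containing at least $12$, and so on. The hard part, where I expect most of the work, is certifying that one of these reductions must succeed: failing every $k\in\{1,\dots,d-1\}$, the rigidity of the spanning condition from the first paragraph should force $n\ge 4d-11$ directly, via a careful double-counting or pigeonhole argument over hyperplanes (exploiting that each hyperplane is spanned, not merely hit, by the columns it contains). The constant $11$ in the bound presumably absorbs the slack from the small-dimensional base cases together with the residual bookkeeping required to handle this last regime.
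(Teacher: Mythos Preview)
Your reformulation is correct and clean: with columns $g_1,\dots,g_n$ of a generator matrix, trifference of $V$ is equivalent to the statement that for every linear hyperplane $H\subset\F_3^d$ the columns lying in $H$ span $H$. Your inductive shortening is also valid: if some $k$-dimensional $U$ contains at least $4k$ columns, passing to $U^\perp$ and deleting those coordinates reduces $(d,n)$ to $(d-k,n-m)$ with $m\ge 4k$, and the inequality is preserved.

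The gap is precisely where you flag it. After the reduction, you are left with the following residual statement: if for every hyperplane $H$ the columns in $H$ span $H$, and no $k$-subspace contains $4k$ columns, then $n\ge 4d-11$. You propose to close this by ``a careful double-counting or pigeonhole argument over hyperplanes'', but this does not reach the constant $4$. Averaging over linear hyperplanes uses only that each $H$ contains at least $d-1$ columns; since a fixed nonzero column lies in a $\frac{3^{d-1}-1}{3^d-1}\approx\frac13$ fraction of hyperplanes, the conclusion is merely $n\gtrsim 3d$, and no refinement of this count that uses only the spanning condition will push past $3$. The extra hypothesis ``no $k$-subspace has $4k$ columns'' is an \emph{upper} bound on columns-per-subspace and does not help a lower bound on $n$ via averaging either. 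So the crux you defer is not bookkeeping; it is the entire content of the proposition, and it needs a genuinely new ingredient.

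For comparison, the paper's proof does not proceed by induction at all. It dualizes to a different (affine) formulation: setting $X=\{\pm g_1,\dots,\pm g_n\}\setminus\{0\}\subset\F_3^d$, trifference becomes the condition that every intersection of two non-parallel affine hyperplanes avoiding $0$ meets $X$, and the goal is $|X|\ge 8d-22$. The two ideas that get the constant from $3$ up to $4$ are: (i) the Brouwer--Schrijver affine blocking bound, that any set of size $\le 2d$ in $\F_3^d$ misses some affine hyperplane (proved via the Combinatorial Nullstellensatz); and (ii) a non-uniform probabilistic choice of hyperplane: one first extracts a basis $e_1,\dots,e_d$ from $X$, then samples $H=\{v:v_i=v_j\}$ for a random pair $\{i,j\}$, which guarantees $|H\cap X|\ge \frac{|X|+4d}{3}-O(1)$ rather than the trivial $|X|/3$. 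Combining these yields $2n\ge 8d-22$, i.e.\ $d\le (n+11)/4$. Neither of these ingredients is a double count over hyperplanes, and your outline does not contain a substitute for them.
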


Before we prove Proposition \ref{lin}, let us give an equivalent formulation. For a vector subspace $V \subset \mathbb F_3^n$ we denote by $V^*$ the space of linear functions on $V$ and let $x_1, \ldots, x_n \in V^*$ be the coordinate functions on $\mathbb F_3^n$ restricted to $V$. That is, a point $(a_1, \ldots, a_n) \in V$ is mapped by $x_i$ to $a_i$. Then the ``trifference hypothesis" applied to vectors $0, v, -u \in V$ implies that there is $i$ such that $x_i(v) = 1$ and $x_i(-u) = -1$ or $x_i(v) = -1$ and $x_i(-u) = 1$. In other words, for any two affine hyperplanes $H_1, H_2 \subset V^*$ given by equations $H_1 = \{x ~|~ x(v) = 1 \}$ and $H_2 = \{x ~|~ x(u) = 1\}$ there is $i$ such that either $x_i \in H_1 \cap H_2$ or $-x_i \in H_1 \cap H_2$. So we arrive at the following problem:

\bigskip

{\bf{Alternate Problem 1}}. For $d\ge 1$, what is the minimum size $f(d)$ of a centrally symmetric set $X \subset \mathbb F_3^d \setminus \{0\}$ such that for any two non-parallel hyperplanes $H_1, H_2 \subset \mathbb F_3^d$ which do not contain 0 there is $x \in X$ which belongs to $H_1 \cap H_2$?

\bigskip

Let $Y$ denote the set of linear functions $\{x_1, \ldots, x_n\}$ on the space $V$ and let $X = Y \cup (-Y) \setminus \{0\}$. Then clearly $X$ is a centrally symmetric set of size at most $2n$. Moreover, as we showed above, $X$ satisfies the condition of Alternate Problem 1. Thus, we have the lower bound $2n \ge f(d)$.
So Proposition \ref{lin} will follow from the following inequality.

\begin{proposition}\label{lin2}
Under the notations from Alternative Problem 1, we have 
$$f(d) \ge 8d - 22.$$
\end{proposition}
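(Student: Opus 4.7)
The plan is to prove Proposition~\ref{lin2} by induction on $d$, using a convenient reformulation of the condition on $X$.

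\textbf{Reformulation.} The condition in Alternate Problem~1 is equivalent to the following: \emph{for every hyperplane $H \subset \mathbb{F}_3^d$ through the origin, the intersection $X \cap H$ spans $H$ as an $\mathbb{F}_3$-vector space}. Indeed, since $X$ is centrally symmetric, the existence of $x \in X$ with $\langle u, x\rangle = \langle v, x\rangle = 1$ is equivalent to the existence of $x \in X$ with $\langle u-v, x\rangle = 0$ and $\langle u, x\rangle \neq 0$. Setting $a = u - v$ and $b = u$, pairs $(u,v)$ with $u, v \neq 0$ and $u \neq \pm v$ correspond bijectively to linearly independent pairs $(a,b)$, so the condition becomes: for every linearly independent $a, b$, some $x \in X \cap a^\perp$ satisfies $x \notin b^\perp$. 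This says exactly that $X \cap a^\perp$ is not contained in any codimension-one subspace of $a^\perp$, i.e., $X \cap a^\perp$ spans $a^\perp$.

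The base cases for small $d$ follow from a direct double count: for each hyperplane $H$ through $0$, $X \cap H$ is centrally symmetric and spans $H$, so $|X \cap H| \geq 2(d-1)$; counting incidences over hyperplanes gives $|X| \geq 2(d-1)(3^d-1)/(3^{d-1}-1)$, which already exceeds $8d - 22$ for $d \leq 8$ or so.

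For the inductive step I would seek a hyperplane $W = w^\perp$ through the origin satisfying: (i) $X \cap W$ still satisfies Alternate Problem~1's condition inside $W$---equivalently, for every codimension-$2$ subspace $V \subseteq W$, $X \cap V$ spans $V$; and (ii) $|X \setminus W| \geq 8$. Such a $W$ gives $|X| \geq |X \cap W| + 8 \geq f(d-1) + 8 \geq 8d - 22$ by the inductive hypothesis. Call a codimension-$2$ subspace $V$ of $\mathbb{F}_3^d$ \emph{deficient} if $X \cap V$ does not span $V$. Then (i) fails for $W$ precisely when $W$ contains a deficient $V$, and since each codimension-$2$ subspace lies in exactly $4$ hyperplanes, the task reduces to controlling the number of deficient subspaces. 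The key observation is that a deficient subspace is expensive: if $\operatorname{span}(X \cap V)$ has codimension $\ell \geq 1$ in $V$, then each of the $4$ hyperplanes $H \supset V$ requires $|X \cap (H \setminus V)| \geq \ell + 1$ in order for $X \cap H$ to span $H$; summing over these $4$ hyperplanes (whose ``outside'' parts $H \setminus V$ partition $\mathbb{F}_3^d \setminus V$) yields $|X \setminus V| \geq 4(\ell+1) \geq 8$. Constraint (ii) is handled by averaging: the expected size of $|X \setminus W|$ over a uniformly random hyperplane $W$ through $0$ is approximately $2|X|/3$, which is $\geq 8$ as long as $|X|$ is not too small.

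The hard part will be combining these two ingredients into a single choice of $W$ satisfying both (i) and (ii), especially when $|X|$ is close to $8d - 22$ and both estimates are tight. Concretely, one must argue that the deficient codimension-$2$ subspaces are never so abundant that they obstruct every hyperplane $W$ with $|X \setminus W| \geq 8$. The additive constant $-22$ in the bound absorbs the slack lost in these estimates together with the base case verification.
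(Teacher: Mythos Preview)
Your reformulation is correct and elegant: the Alternate Problem~1 condition is indeed equivalent to requiring that $X\cap H$ span $H$ for every linear hyperplane $H$. The base-case double count is also fine and, as you note, covers $d\le 8$.

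The gap is exactly where you flag it: the inductive step is not proved. You need a hyperplane $W$ satisfying \emph{both} (i) and (ii), and nothing you wrote controls the set of hyperplanes failing (i). Your ``key observation'' that a deficient codimension-$2$ subspace $V$ forces $|X\setminus V|\ge 8$ is a statement about $V$, not about any hyperplane $W$, and it does not bound the number of deficient $V$'s: when $|X|\approx 8d$, a \emph{typical} codimension-$2$ subspace $V$ already has $|X\setminus V|\approx \tfrac{8}{9}|X|\gg 8$, so the inequality $|X\setminus V|\ge 8$ carries essentially no information. In particular, there is no argument here preventing every hyperplane from containing some deficient $V$. Without that, your induction never gets off the ground, and the only bound you have actually established is the double-count $|X|\ge 2(d-1)\tfrac{3^d-1}{3^{d-1}-1}\sim 6d$, which is strictly weaker than $8d-22$ for all $d\ge 9$.

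For comparison, the paper's proof is not inductive and does not use your reformulation. It introduces the auxiliary quantity $m(n,d)$ (the guaranteed size of $X\cap H$ for the best linear hyperplane $H$), proves via the Brouwer--Schrijver affine-blocking lemma (Combinatorial Nullstellensatz) that $m(n,d)\ge n-4d+4$ implies $f(d)\ge n+2$, and then lower-bounds $m(n,d)$ by a tailored random-hyperplane argument that first extracts a basis $e_1,\dots,e_d\in X$ and averages over hyperplanes $\{v_i=v_j\}$. The jump from $\approx 6d$ to $8d-22$ comes precisely from this last step; the naive uniform-random hyperplane (essentially your averaging for (ii)) only recovers $f(d)\ge 6d-O(1)$. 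So the ingredient your sketch is missing is not a matter of bookkeeping absorbed by the constant $-22$, but a genuinely new idea on the level of the Brouwer--Schrijver lemma together with the non-uniform averaging in Lemma~\ref{lm}.
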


For the proof of Proposition \ref{lin2}, we will need the following preliminary result due to Brouwer and Schrijver \cite{BS78}.

\begin{lemma}\label{AF}
For any set $X \subset \mathbb F_3^d$ of size at most $2d$ there is an affine hyperplane $H \subset \mathbb F_3^d$ which does not intersect $X$.
\end{lemma}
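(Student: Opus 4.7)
The plan is to prove the contrapositive via a polynomial method argument of the Jamison / Brouwer--Schrijver type: if every affine hyperplane of $\mathbb{F}_3^d$ meets $X$, then $|X| \ge 2d+1$. Parametrize the affine hyperplanes as $H_{a,c} = \{y \in \mathbb{F}_3^d : \langle a, y\rangle = c\}$, with $a \in \mathbb{F}_3^d \setminus \{0\}$ and $c \in \mathbb{F}_3$, and form the auxiliary polynomial
$$ P(a,c) \;:=\; \prod_{x \in X} \bigl(\langle a, x\rangle - c\bigr) \;\in\; \mathbb{F}_3[a_1,\ldots,a_d,c], $$
which has total degree $|X|$. The blocking hypothesis is equivalent to saying that $P(a,c) = 0$ for every $(a,c)$ with $a \ne 0$, while a direct computation gives $P(0,c) = (-c)^{|X|}$.

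The next step is to identify $P$ with an explicit reduced polynomial on $\mathbb{F}_3^{d+1}$. Since the indicator of $\{a = 0\}$ on $\mathbb{F}_3^d$ is $\prod_{i=1}^d (1 - a_i^2)$, the data above determines $P$ as a function on $\mathbb{F}_3^{d+1}$:
$$ P(a,c) \;\equiv\; \tilde R(c)\cdot \prod_{i=1}^d \bigl(1 - a_i^2\bigr) \pmod{a_1^3 - a_1,\ldots,a_d^3 - a_d,\, c^3 - c}, $$
where $\tilde R(c)$ is the reduction of $(-c)^{|X|}$ modulo $c^3 - c$. For $|X| \ge 1$, using $c^3 = c$ in $\mathbb{F}_3$, one checks that $\tilde R(c) = -c$ when $|X|$ is odd and $\tilde R(c) = c^2$ when $|X|$ is even, so in either case $\deg \tilde R \ge 1$. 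Note that the right-hand side is already in reduced form (each variable appears to degree at most $2$), and its total degree equals $\deg \tilde R + 2d \ge 2d+1$.

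To finish, I invoke the standard fact that reducing modulo $(a_i^3 - a_i,\, c^3 - c)$ never increases total degree, so the reduced representative of $P$ has total degree at most $\deg P = |X|$. Combining this with the lower bound $2d+1$ established above forces $|X| \ge 2d+1$, which is the contrapositive. Since $X = \emptyset$ is already excluded (no point can block any hyperplane), the case $|X|=0$ is vacuous.

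The only real subtlety — and what I expect to be the main thing to check — is that the reduced polynomial $\tilde R(c)\prod_i(1-a_i^2)$ genuinely achieves total degree $\ge 2d+1$, i.e.\ that the top monomial $a_1^2\cdots a_d^2 \cdot \tilde R(c)$ has nonzero coefficient. This is immediate from the computation of $\tilde R$, but it is the pivot on which the whole argument turns: if $\tilde R$ were to collapse in degree, the bound would be lost. Modulo this verification, the contrapositive yields exactly Lemma~\ref{AF}.
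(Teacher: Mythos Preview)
Your argument is correct and follows the same overall polynomial-method framework as the paper's proof: both form the polynomial $P(a,c)=\prod_{x\in X}(\langle a,x\rangle-c)$, assume it vanishes for all $a\neq 0$, and deduce $\deg P\ge 2d+1$. The difference lies in how the degree lower bound is extracted. The paper invokes the Karasev--Petrov coefficient formula (Lemma~\ref{CN}) with grids $U_1=\cdots=U_d=\mathbb F_3$, $U_{d+1}=\{0,1\}$ to show that the coefficient of $t_1^2\cdots t_d^2 t_{d+1}$ is nonzero. You instead identify the \emph{reduced} representative of $P$ directly as $\tilde R(c)\prod_i(1-a_i^2)$, using that the indicator of $\{a=0\}$ in $\mathbb F_3^d$ is $\prod_i(1-a_i^2)$ and that reduced polynomials are uniquely determined by their values on $\mathbb F_3^{d+1}$. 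Your route is slightly more self-contained (no black-box lemma needed) and yields the same conclusion; the paper's route has the mild advantage of pinpointing a specific nonvanishing monomial. Both are standard variants of the Jamison/Brouwer--Schrijver argument.
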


For the reader's convenience, we include a short proof using the (quantitative) version of Alon's Combinatorial Nullstellensatz \cite{NA99} due to Karasev and Petrov \cite{KP12}.

\begin{proofof}{Lemma \ref{AF}}
We denote by $\langle \cdot, \cdot \rangle$ the standard scalar product on $\mathbb F_3^d$. Let $X \subset \mathbb F_3^d$ be a set of size at most $2d$ and consider the following polynomial in $d+1$ variables $t_1, \ldots, t_{d+1}$:
$$
P(t_1, \ldots, t_{d+1}) = \prod_{x \in X} (\langle x, t\rangle - t_{d+1}),
$$
where $t:=(t_1, \ldots, t_d)$. If $t \neq 0$ then it defines a hyperplane
$$
H_{t, t_{d+1}} = \{x \in \mathbb F_3^d ~|~ \langle x, t\rangle = t_{d+1}\}.
$$
Therefore, 
$$P(t_1,\ldots,t_{d+1}) = 0\ \ \text{if and only if}\ \ H_{t, t_{d+1}}\ \text{intersects}\ X.$$
So we may assume that $P(t_1,\ldots,t_{d+1}) = 0$ holds
for all vectors $(t_1,\ldots,t_{d+1})$ whose first $d$ coordinates do not all vanish. However, note that $P(0, \ldots, 0) = 0$ and $P(0, \ldots, 0, 1) = (-1)^{d+1}$, so at this point we can invoke the quantitative version of the Combinatorial Nullstellensatz \cite[Theorem 4]{KP12}. 

\begin{lemma} \label{CN}
Let $d_{1},\ldots,d_{n}$ be non-negative integers and let $P(x_1,x_2,\dots,x_n)$ be a $n$-variate polynomial of degree at most $d_1+\ldots+d_n$ with coefficients in some field $\mathbb{F}$. Denote by $C$ the
coefficient at $x_1^{d_1}\dots x_n^{d_n}$ in $P$ (which may possibly be $0$) and let $U_1$, $U_2$, \ldots, $U_n$ be arbitrary subsets
of $\mathbb{F}$ such that $|U_i|=d_i+1$ for every $i=1,\ldots,n$.
Furthermore, let $D(U,\alpha)=\prod_{c\in U\setminus \{\alpha\}}
(\alpha-c)$ for $\alpha\in U$. 
Then,
$$
C=\sum_{\alpha_1\in U_1} \ldots \sum_{\alpha_{n} \in U_{n}} \frac{P(\alpha_1,\dots,\alpha_n)}
{D(U_1,\alpha_1)\dots D(U_n,\alpha_n)}.$$
\end{lemma}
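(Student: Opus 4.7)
The plan is to reduce the identity to monomials via linearity and then extract the coefficient through a product-formula reading of univariate Lagrange interpolation. Both sides of the claimed equation are linear functionals of $P$ on the space of polynomials of total degree at most $D := d_1+\cdots+d_n$, so it suffices to verify the identity on every monomial $x^e := x_1^{e_1}\cdots x_n^{e_n}$ with $e_1+\cdots+e_n \le D$. For such a monomial, the right-hand side factorizes as
$$
\sum_{\alpha_1 \in U_1}\cdots\sum_{\alpha_n \in U_n}\frac{\alpha_1^{e_1}\cdots\alpha_n^{e_n}}{D(U_1,\alpha_1)\cdots D(U_n,\alpha_n)} \;=\; \prod_{i=1}^{n} S(U_i, e_i),
$$
where $S(U, e) := \sum_{\alpha \in U} \alpha^{e}/D(U,\alpha)$.

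Next I would identify each one-variable factor $S(U_i, e_i)$ with the coefficient of $t^{d_i}$ in the Lagrange interpolation polynomial of $t^{e_i}$ at the nodes $U_i$. Indeed, expanding
$$
L_i(t) \;=\; \sum_{\alpha \in U_i}\frac{\alpha^{e_i}}{D(U_i,\alpha)}\prod_{c \in U_i,\, c \ne \alpha}(t-c),
$$
each product $\prod_{c \ne \alpha}(t-c)$ has leading coefficient $1$ on $t^{d_i}$ (since $|U_i|=d_i+1$), so the $t^{d_i}$-coefficient of $L_i$ is exactly $S(U_i, e_i)$. Whenever $e_i \le d_i$, the polynomial $t^{e_i}$ already has degree $\le d_i$ and therefore coincides with its Lagrange interpolant, giving $S(U_i, e_i) = 1$ if $e_i = d_i$ and $S(U_i, e_i) = 0$ if $e_i < d_i$.

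The proof then concludes by case analysis. If $e_i = d_i$ for every $i$, then $\prod_i S(U_i, e_i) = 1$, matching the coefficient of $x^d$ in $x^d$. If $e_i \le d_i$ for every $i$ but $e \ne d$, then some $S(U_j, e_j) = 0$, the product vanishes, and the coefficient of $x^d$ in $x^e$ is also zero. The only delicate case is when some coordinate $e_i$ strictly exceeds $d_i$: here $S(U_i, e_i)$ can be a nonzero divided difference, and the identity is saved precisely by the total-degree hypothesis $|e| \le |d|$, which forces some other index $j$ to satisfy $e_j < d_j$, again killing the product. This last case is the main obstacle, and it is what makes the degree constraint $\deg P \le D$ essential: without it, one can produce monomials $x^e$ for which every $S(U_i, e_i) \ne 0$ while $e \ne d$, and the formula fails.
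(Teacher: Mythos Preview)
Your argument is correct. The reduction by linearity to monomials, the factorization of the right-hand side into the one-variable sums $S(U_i,e_i)$, the identification of $S(U,e)$ with the leading coefficient of the Lagrange interpolant of $t^e$ on the node set $U$, and the final pigeonhole step (if some $e_i>d_i$ then the degree bound forces some $e_j<d_j$, so the product vanishes) are all sound.

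There is nothing to compare against in the paper itself: Lemma~\ref{CN} is not proved there but merely quoted from Karasev and Petrov \cite[Theorem 4]{KP12} and then applied as a black box inside the proof of Lemma~\ref{AF}. Your write-up is in fact essentially the standard proof of this quantitative form of the Combinatorial Nullstellensatz (and is close in spirit to the argument in \cite{KP12}), so it would serve perfectly well as a self-contained replacement for the citation.
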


Applied when $n=d+1$, $\mathbb{F} = \mathbb{F}_{3}$ and with $U_{1}=\ldots=U_{d}=\mathbb{F}_{3}$, $U_{d+1}=\left\{0,1\right\}$, Lemma \ref{CN} ensures that the coefficient of $P$ in front of the monomial $t_1^2 \ldots t_d^2 t_{d+1}$ is not zero which means that the degree of $P$ must be at least $2d+1$. This is clearly a contradiction since, by definition, $P$ was a product of $|X| \leq 2d$ linear terms. It thus follows that there must be some $t$ such that $H_t \cap X = \emptyset$. 
\end{proofof}

In order to obtain a lower bound on $f(d)$ we will consider yet another auxiliary problem.

\bigskip

{\textbf{Alternate Problem 2}}. For given integers $d, n \ge 1$, what is the maximum number $m = m(n, d)$ such that for every centrally symmetric set $X \subset \mathbb F_3^d \setminus \{0\}$ of size $n$ there is a hyperplane passing through zero and such that $|X \cap H| \ge m$?

\bigskip

Note that $m(n, d)$ is monotonically increasing in $n$.
We can bound $f(d)$ in terms of the function $m(n, d)$ as follows.

\begin{lemma}\label{aux1}
If for given $n, d \ge 1$ we have $m(n, d) \ge n - 4d+4$, then 
$$f(d) \ge n+2.$$
\end{lemma}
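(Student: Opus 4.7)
The plan is to assume, for contradiction, that $f(d) \le n+1$, and to use the hypothesis $m(n,d) \ge n-4d+4$ together with Lemma \ref{AF} to construct a pair of affine hyperplanes which violates the defining property of $X$ in Alternate Problem 1.

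Since any centrally symmetric subset of $\mathbb F_3^d \setminus \{0\}$ has even cardinality, if $f(d) \le n+1$ I can add pairs $\{y,-y\}$ of new elements if necessary to obtain a centrally symmetric set $X \subset \mathbb F_3^d \setminus \{0\}$ (still satisfying the hypothesis of Alternate Problem 1, as any superset does) of size exactly $n$ if $n$ is even, or exactly $n+1$ if $n$ is odd. Applying the hypothesis on $m(n,d)$ directly when $|X|=n$, and combined with the monotonicity of $m(\cdot,d)$ when $|X|=n+1$, I obtain a linear hyperplane $H = \{x : \langle c, x \rangle = 0\}$ with $|X \cap H| \ge n - 4d + 4$. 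Since $X \setminus H$ is centrally symmetric and avoids $0$, it has even size, so $|X \setminus H| \le 4d-4$; therefore the set $Y := \{x \in X : \langle c, x \rangle = 1\}$, which is one of the two centrally symmetric halves of $X \setminus H$, has size at most $2(d-1)$.

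Next, I view the affine hyperplane $H_c := \{x : \langle c, x \rangle = 1\}$ as a copy of $\mathbb F_3^{d-1}$ and apply Lemma \ref{AF} to $Y$ inside it. This yields an affine hyperplane $A$ of $H_c$, i.e.\ a $(d-2)$-dimensional affine subspace of $\mathbb F_3^d$ contained in $H_c$, with $A \cap Y = \emptyset$. The affine hyperplanes of $\mathbb F_3^d$ containing $A$ correspond bijectively to the $4 = (3^2-1)/(3-1)$ one-dimensional subspaces of $\mathbb F_3^d/(A-A) \cong \mathbb F_3^2$, and exactly one of them passes through the origin, namely $\mathrm{span}(A)$. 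One of the remaining three is $H_c$ itself, leaving at least two affine hyperplanes $H'$ not through $0$; any such $H'$ is non-parallel to $H_c$ (they both contain the nonempty $A$) and satisfies $H_c \cap H' = A$. Applying the condition of Alternate Problem 1 to $X$ with the pair $(H_c, H')$ produces some $x \in X$ with $x \in H_c \cap H' = A$, but then $x \in X \cap H_c = Y$ while also $x \in A$, contradicting $A \cap Y = \emptyset$. Hence $f(d) \ge n+2$.

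The main obstacle is the pencil count in the last step: one has to exhibit a second affine hyperplane $H' \ne H_c$ whose intersection with $H_c$ is exactly the subspace $A$ produced by Lemma \ref{AF} and which also avoids the origin. Everything else is a direct combination of the $m(n,d)$ hypothesis, a small parity argument normalizing $|X|$ to lie in $\{n,n+1\}$, and Brouwer-Schrijver applied inside $H_c$.
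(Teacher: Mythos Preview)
Your proof is correct and follows essentially the same route as the paper: find a linear hyperplane $H$ with $|X\cap H|\ge n-4d+4$, pass to an affine coset $H_c$ where $Y=X\cap H_c$ has size at most $2d-2$, and invoke Brouwer--Schrijver (Lemma~\ref{AF}) to produce a second hyperplane whose intersection with $H_c$ misses $Y$. The only difference is in how Lemma~\ref{AF} is applied: the paper works in the ambient $\mathbb F_3^d$ and applies it to $Y\cup\{0\}\cup\{x\}$ (with $x\in -H_c$), so that the resulting hyperplane $H_2$ is automatically forced to avoid $0$ and to be non-parallel to $H_c$; you instead apply Lemma~\ref{AF} inside $H_c\cong\mathbb F_3^{d-1}$ to $Y$ alone, and then use a short pencil count in $\mathbb F_3^d/(A-A)\cong\mathbb F_3^2$ to lift the $(d-2)$-flat $A$ to a second affine hyperplane $H'\neq H_c$ not through $0$. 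Both variants are clean; yours has the mild advantage of making the non-parallelism of $H'$ and $H_c$ immediate. (As a side remark, your parity normalization for odd $n$ relies on monotonicity of $m(\cdot,d)$, which is delicate since $m(n,d)$ is vacuous for odd $n$; but the paper tacitly assumes $n$ even as well, and the lemma is only ever applied in that case.)
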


\begin{proofof}{Lemma \ref{aux1}}
Let $X \subset \mathbb F_3^d$ be a centrally symmetric set of size $n$ and $m(n, d) \ge n-4d+4$. Then by the definition of $m$ applied to $X$ there is a hyperplane $H_1 \subset \mathbb F_3^d$ such that $|X \cap H_1| \ge n - 4d+4$ and so the size of the set $X \setminus H_1$ is at most $4d-4$. By the pigeonhole principle there is an affine hyperplane $H'_1$ parallel to $H_1$ such that the set $Y = X \cap H'_1$ has size at most $2d-2$. Pick any point $x \in -H_1'$ and apply Lemma \ref{AF} to the set $Y \cup \{0\} \cup \{x\}$.  
So there is a hyperplane $H_2$, not passing through zero and not parallel to $H'_1$ such that $Y \cap H_2 = \emptyset$. 

So we constructed a pair of non-parallel hyperplanes $H'_1$ and $H_2$ such that $X \cap H'_1 \cap H_2 = \emptyset$. This proves that $f(d) > n$. But $f(d)$ and $n$ are even numbers and so we in fact have $f(d) \ge n+2$.

\end{proofof}

In the context of Alternate Problem 2, if we take $H$ at random then from the linearity of expectation we immediately obtain 
$$
m(n, d) \ge \frac{3^{d-1}-1}{3^d-1}n \ge \frac{n}{3} - \frac{n}{3^{d-1}}.
$$
So by Lemma \ref{aux1}, if $n/3 \ge n-4d+4 + \frac{n}{3^{d-1}}$, then $f(d) \ge n+2$. In other words, we already proved that $f(d) \ge 6d-10$.
Now we improve this trivial lower bound on $m$ to wrap up the proof of Proposition \ref{lin2}.

\begin{lemma}\label{lm}
For any integers $d$ and $n$ such that $d \ge 3$ and $n \ge 2d$, 
$$m(n, d) \ge \frac{n + 4d}{3} - 3 - \frac{n}{d}.$$
\end{lemma}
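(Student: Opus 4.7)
The plan is to combine a pigeonhole on coordinate hyperplanes with a biased probabilistic argument. First, if $X$ does not span $\mathbb{F}_3^d$ then $X$ lies in some hyperplane $H$ through $0$ and $|X \cap H| = n$; one checks directly that $n \ge (n+4d)/3 - 3 - n/d$ whenever $n \ge 2d$ and $d \ge 3$. So I may assume $X$ spans and fix a basis $B = \{b_1, \ldots, b_d\} \subset X$. By central symmetry $\pm B \subset X$, so $2d$ of the $n$ points have support size $1$ in the $B$-coordinates; write $S_x$ for the support size of $x \in X$ and $\Sigma := \sum_{x \in X} S_x$.

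The argument uses two complementary ingredients. For the first, among the $d$ coordinate hyperplanes $H_i = \{y : y_i = 0\}$, the average of $|X \cap H_i|$ is $n - \Sigma/d$, giving the deterministic bound $m(n, d) \ge n - \Sigma/d$. For the second, I pick a random functional $\phi \in \mathbb{F}_3^d$ (in the dual basis) whose coordinates are iid with $\Pr[\phi_i = 0] = p$ and $\Pr[\phi_i = \pm 1] = (1-p)/2$, for a parameter $p \in (1/3, 1)$. A character-sum identity over $\mathbb{F}_3$ yields
\[
\Pr[\phi(x) = 0] = \tfrac{1}{3} + \tfrac{2}{3} q^{S_x}, \qquad q := \tfrac{3p-1}{2} > 0,
\]
hence, using $\pm B \subset X$ (contributing $2dq$) and $q^{S_x} \ge 0$ for the remaining points, $\mathbb{E}[|X \cap \ker \phi|] \ge n/3 + 4dq/3$. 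Conditioning on $\phi \ne 0$ (probability $1 - p^d$) turns this into
\[
m(n, d) \ge \frac{n/3 + 4dq/3 - n p^d}{1 - p^d}.
\]
Tuning $p$ so that $p^d$ is on the order of $1/d$ is intended to produce the $-n/d$ correction (from the $np^d$ term) and the $-3$ correction (from expanding $(1-p^d)^{-1}$).

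The main obstacle is that the crude bound $\sum_x q^{S_x} \ge 2dq$ only exploits $\pm B$, and a direct optimization over $p$ alone appears to miss the target by an additive $O(\log d)$. I expect to close this gap via a dichotomy on the magnitude of $\Sigma$: when $\Sigma$ is small the coordinate bound $m(n,d) \ge n - \Sigma/d$ already exceeds the target (precisely the regime where the probabilistic bound is weakest), while when $\Sigma$ is large many $x \in X$ have large support and the neglected terms $q^{S_x}$ with $S_x \ge 2$ contribute positively, strengthening the probabilistic bound. Calibrating the crossover point between these two regimes, together with a careful accounting of lower-order terms in the expansion of $(1-p^d)^{-1}$, is the crux of the proof.
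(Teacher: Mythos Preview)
Your approach is genuinely different from the paper's, but the proposed dichotomy does not close the $O(\log d)$ gap, and I believe it cannot. The difficulty is monotonicity in the wrong direction. For $q\in(0,1)$ the map $s\mapsto q^{s}$ is decreasing and convex, so by Jensen the quantity $\sum_{x\in X} q^{S_x}$ admits the lower bound $2dq+(n-2d)\,q^{\bar S}$ with $\bar S=(\Sigma-2d)/(n-2d)$, and this lower bound is \emph{decreasing} in $\Sigma$. Thus the provable probabilistic estimate is strongest precisely when $\Sigma$ is small, which is also when the coordinate-hyperplane bound $m\ge n-\Sigma/d$ is strongest. There is no complementary regime: both tools degrade together as $\Sigma$ grows, so splitting on $\Sigma$ cannot recover the missing $O(\log d)$. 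The underlying tension is structural: to make $\Pr[\phi_k=0]=p$ close to $1$ (so that $\ker\phi$ captures nearly all of $\pm B$) you are forced into $p^d$ of order at least $1/\mathrm{poly}(d)$, whence $2dp\le 2d-\Theta(\log d)$; no bookkeeping on the remaining points can undo this loss on the basis vectors.

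The paper avoids this tension entirely by averaging over the $\binom{d}{2}$ explicit hyperplanes $H_{i,j}=\{v:v_i=v_j\}$ (in the basis extracted from $X$). Each $H_{i,j}$ deterministically contains $2d-4$ of the vectors $\pm e_1,\ldots,\pm e_d$, so the basis contribution is exactly $2d-4$ with no logarithmic loss; meanwhile a direct count shows that any other vector, with $a,b,c$ coordinates equal to $-1,0,1$, lies in $H_{i,j}$ with probability $\bigl(\binom{a}{2}+\binom{b}{2}+\binom{c}{2}\bigr)/\binom{d}{2}\ge \tfrac{1}{3}-\tfrac{2}{3(d-1)}$. Summing gives $\mathbb{E}\,|H_{i,j}\cap X|\ge (n+4d)/3-3-n/d$ immediately. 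The point is that the family $\{H_{i,j}\}$ simultaneously achieves ``almost all basis vectors'' and ``probability $\approx 1/3$ for generic vectors'' without any competing constraint analogous to $p^d\to 0$; your random-functional model cannot decouple these two requirements.
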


\begin{proofof}{Lemma \ref{lm}}
Let $X \subset \mathbb F_3^d \setminus \{0\}$ be a centrally symmetric set of size $n$. We may clearly assume that $X$ is not contained in any hyperplane which passes through zero. So we can find vectors $e_1, \ldots, e_d \in X$ which form a basis of $\mathbb F_3^d$. Moreover, we may assume that $(e_1, \ldots, e_d)$ is the standard basis of $\mathbb F_3^d$. Note that vectors $-e_1, \ldots, -e_d$ also belong to $X$ since $X$ is centrally symmetric. Pick uniformly at random a pair $\{i, j\} \subset {[d] \choose 2}$ and consider the hyperplane 
$$
H = H_{i, j} = \{ (v_1, \ldots, v_d) \in \mathbb F_3^d~|~ v_i = v_j\}.
$$
Let us estimate the expected size of the intersection $|H \cap X|$. Put $E = \{\pm e_1, \ldots, \pm e_d\} \subset X$ and $Y = X \setminus E$. Note that $|H \cap E| = 2d-4$ for any $i, j$. Now consider a fixed vector $w \in \mathbb F_3^d$ and let us find the probability $p_w$ that $w \in H$. Suppose that $w$ has precisely $a$, $b$ and $c$ coordinates equal to $-1,0, 1 \in \mathbb{F}_{3}$, respectively. First, note that since $w$ is $d$-dimensional, we must have that $a+b+c=d$. Second, note that 
$$
p_w = \frac{a(a-1)+b(b-1)+c(c-1)}{d(d-1)} \ge \frac{d^2/3 - d}{d(d-1)} = \frac{1}{3} - \frac{2}{3d-3},
$$
where in the first inequality we used the simple fact that $3(a^2+b^2+c^2) \geq (a+b+c)^2 = d^2$. This allows us to conclude that
\begin{eqnarray*}
\mathbb E \left[|H \cap X|\right] &=& \mathbb E \left[|H \cap E|\right] +\mathbb E \left[|H \cap Y|\right] \\
&\ge& 2d-4 + \left(\frac{1}{3} - \frac{2}{3d-3}\right)|Y| \\
&\ge& 2d-4 + \frac{n-2d}{3} - \frac{2n - 4d}{3d-3}.
\end{eqnarray*}
After some further simple manipulations, this yields
\begin{equation*}
    \mathbb E\left[|H \cap X|\right] \ge \frac{n + 4d}{3} - 3 - \frac{n}{d},
\end{equation*}
which means that there exists a choice of $\{i, j\}$ such that $H_{i, j}$ satisfies $|H_{i, j} \cap X| \ge \frac{n + 4d}{3} - 3 - \frac{n}{d}$, which settles the proof of Lemma \ref
{lm}.
\end{proofof}

\bigskip

Returning to the proof of Proposition \ref{lin2}, we now let $n = 8d - a$; then, by Lemma \ref{lm} we have 
$$
m(n, d) \ge \frac{12d - a}{3} - 3 - 8 = 4d - 11 - \frac{a}{3}
$$
and so 
$$
m(n, d) - (n-4d+4) \ge (4d-11 - \frac{a}{3}) - 4d - 4 + a= \frac{2a}{3} - 15.
$$
Therefore, if we take $a = 24$ then $\frac{2a}{3} - 15 > 0$. By Lemma \ref{aux1}, we therefore get
$$
f(d) \ge n+2 = 8d - 22,
$$
which completes the proof of Proposition \ref{lin2}, and thus that of Proposition \ref{lin}. 

\subsection{An improved lower bound on $m(n, d)$}

The stronger bound in Theorem \ref{main2} follows essentially from a better estimate on $m(n, d)$, which we will present below. We start with a simple packing bound.

\begin{proposition}\label{pc}
Let $V \subset \mathbb F_3^n$ be a $d$ dimensional subspace such that any non-zero vector $v \in V$ has at least $2k+1$ non-zero coordinates. Then we have
\begin{equation}\label{pack}
    {n \choose k} 2^k \le 3^{n-d}.
\end{equation}
\end{proposition}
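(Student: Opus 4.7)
The plan is to deduce this from the standard sphere packing (Hamming) bound applied to $V$ regarded as a ternary linear code of minimum distance at least $2k+1$.

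First I would translate the hypothesis into a minimum distance statement. Since $V$ is linear, the Hamming distance between two codewords $u, v \in V$ equals the weight of $u - v \in V$, so the assumption that every non-zero element of $V$ has at least $2k+1$ non-zero coordinates is exactly the statement that $V$ has minimum Hamming distance at least $2k+1$. In particular, the Hamming balls of radius $k$ centered at distinct elements of $V$ are pairwise disjoint.

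Next I would count. The Hamming ball of radius $k$ around any point in $\mathbb F_3^n$ has size $\sum_{i=0}^{k} \binom{n}{i} 2^i$, since choosing $i$ coordinates to perturb and assigning each one of $2$ non-zero values in $\mathbb F_3$ gives a point at distance exactly $i$. Therefore
$$
|V| \cdot \sum_{i=0}^{k} \binom{n}{i} 2^i \le 3^n.
$$
Since $|V| = 3^d$ and every term in the sum is non-negative, I can discard all but the $i = k$ term and divide, obtaining $\binom{n}{k} 2^k \le 3^{n-d}$, which is the desired inequality.

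There is no real obstacle here: the only subtlety is noticing that the hypothesis is a minimum weight condition and then invoking that weight equals distance in a linear code before applying the sphere packing argument. The bound in the proposition is only the dominant term of the full Hamming bound, and is kept in this weaker form presumably because that is what is convenient for the subsequent estimates on $m(n,d)$.
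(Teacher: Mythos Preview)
Your argument is correct and is essentially the paper's own proof: the paper also takes the Hamming ball $B$ of radius $k$ around the origin, observes $|B|\ge \binom{n}{k}2^k$, and uses the minimum-weight hypothesis to see that the translates $b+v$ with $(b,v)\in B\times V$ are distinct, giving $|B|\,|V|\le 3^n$. The only cosmetic difference is that the paper phrases disjointness of the balls as injectivity of the map $(b,v)\mapsto b+v$ rather than as a minimum-distance statement.
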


\begin{proof}
Denote by $B \subset \mathbb F_3^n$ the set of vectors $v$ such that $v$ has at most $k$ non-zero coordinates. Then we have $|B| \ge {n \choose k} 2^k$ and it is easy to see that $b_1 + v_1 \neq b_2 + v_2$ for any distinct pairs $(b_1, v_1), (b_2, v_2) \in B \times V$. We conclude that 
$$
3^n \ge |B| |V| \ge {n \choose k} 2^k 3^d.
$$
\end{proof}

We use this to deduce the following preliminary result.

\begin{lemma}\label{mb}
Let $m = m(2n, d)$ for some $n \ge d \ge 1$ and let $k = \lfloor\frac{2n-m-1}{4}\rfloor$. Then, 
$${n \choose k} 2^k \le 3^{n-d}.$$
\end{lemma}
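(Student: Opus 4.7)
\textbf{Proof plan for Lemma \ref{mb}.} The plan is to turn the information encoded in $m = m(2n,d)$ into the existence of a linear code of dimension $d$ with large minimum distance in $\mathbb F_3^n$, and then apply Proposition \ref{pc} directly.

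First, I would pick an extremal set for $m$: by the definition of $m(2n, d)$ as a maximum, there must exist a centrally symmetric $X \subset \mathbb F_3^d \setminus \{0\}$ of size $2n$ for which \emph{every} hyperplane through the origin meets $X$ in at most $m$ points. The degenerate case where $X$ fails to span $\mathbb F_3^d$ forces $m = 2n$, which makes the claimed $k$ negative and the inequality vacuous; so one may assume from here on that $X$ spans $\mathbb F_3^d$.

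Next I would build the code. Grouping $X$ into its $n$ antipodal pairs $\{\pm x_1\}, \ldots, \{\pm x_n\}$, the linear map $\phi : \mathbb F_3^d \to \mathbb F_3^n$ defined by $\phi(v) = (\langle x_1, v\rangle, \ldots, \langle x_n, v\rangle)$ is injective because the $x_i$ span, so $V := \phi(\mathbb F_3^d)$ is a $d$-dimensional subspace of $\mathbb F_3^n$. For a non-zero $v \in \mathbb F_3^d$, the zero coordinates of $\phi(v)$ correspond to those indices $i$ for which $x_i$ lies in the hyperplane $H_v = \{y : \langle y, v\rangle = 0\}$, and by central symmetry this contributes exactly $|X \cap H_v|/2 \le m/2$. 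Hence every non-zero codeword of $V$ has Hamming weight at least $n - m/2$.

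To finish, note that $|X \cap H_v|$ is always even, since $X \cap H_v$ is centrally symmetric and avoids $0$; in particular $m$ is even, and this is precisely what is needed for the stated value $k = \lfloor (2n-m-1)/4 \rfloor$ to satisfy $2k+1 \le n - m/2$. Applying Proposition \ref{pc} to $V$ then yields $\binom{n}{k} 2^k \le 3^{n-d}$. The conceptual content of the argument sits entirely in the construction of $V$ from $X$ and the translation of the hyperplane-intersection bound into a minimum-distance bound; the only items that genuinely need care are the parity-based rounding matching the stated $k$ and the non-spanning edge case, both of which are short.
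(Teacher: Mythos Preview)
Your proposal is correct and follows essentially the same route as the paper: pick an extremal centrally symmetric set $X$ of size $2n$ for which every hyperplane through the origin meets $X$ in at most $m$ points, evaluate the $n$ linear forms $v\mapsto \langle x_i,v\rangle$ to obtain a $d$-dimensional code in $\mathbb F_3^n$, translate the hyperplane bound into a minimum-weight bound, and invoke Proposition~\ref{pc}. One small point worth noting: the paper asserts that the extremal $X$ meets every hyperplane in at most $m-1$ points, which is not quite what the definition of $m(2n,d)$ gives; your version (at most $m$, together with the observation that $m$ is even) is the clean way to land exactly on $k=\lfloor (2n-m-1)/4\rfloor$, and it yields the same conclusion.
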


\begin{remark}
One can check that the bound on $m(n, d)$ which follows from Lemma \ref{mb} beats the bound from Lemma \ref{lm} for $n \lesssim 5.5d$.
\end{remark}

\smallskip

\begin{proofof}{Lemma \ref{mb}}
Let $X = \{x_1, -x_1, x_2, -x_2, \ldots, x_n, -x_n\}$ be a subset of $V = \mathbb F_3^d$ such that any hyperplane contains at most $m-1$ elements of $X$. Note that, in particular, the set $X$ is not entirely contained in any hyperplane.

Define a linear map $\varphi: V^* \rightarrow \mathbb F_3^n$ which sends a linear function $\xi$ on $V$ to the vector $(\xi(x_1), \ldots, \xi(x_n))$. Since $X$ is not contained in any hyperplane, the image of $\varphi$ in $\mathbb F_3^n$ is a $d$-dimensional subspace $U \subset \mathbb F_3^n$. We claim any vector $\varphi(\xi) \in U$ has at least $n-\frac{m-1}{2}$ non-zero coordinates. Indeed, if the number of zero coordinates of the vector $\varphi(\xi) \in U$ is equal to $t$ then the hyperplane $\{\xi = 0\}$ contains exactly $2t$ elements of $X$. So $2t \le m-1$ by the assumption on $X$. By applying Proposition \ref{pc} with $k = \lfloor \frac{2n-m-1}{4}\rfloor$, we obtain the desired inequality.
\end{proofof}

\begin{lemma}\label{tech}
There is an absolute constant $\varepsilon > 0$ such that for every sufficiently large $d$ and positive integer $n$ satisfying $8d \le n \le (8+\varepsilon)d$, we have 
$$m(n, d) \ge n-4d+4.$$
\end{lemma}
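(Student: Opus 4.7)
The plan is to argue by contradiction. Suppose that $m(n,d) \le n-4d+3$ for some $n$ in the admissible range and $d$ sufficiently large. Since $|X \cap H|$ is even for any centrally symmetric $X$ and any hyperplane $H$ through $0$, this forces $m(n,d) \le n-4d+2$, and the $\varphi$-correspondence from the proof of Lemma \ref{mb} produces a $d$-dimensional subspace $U \subseteq \mathbb F_3^{n/2}$ with $w_{\min}(U) \ge 2d-1$. The identity $w_{\min}(U) = n/2 - \max_H |X \cap H|/2$ combined with the lower bound from Lemma \ref{lm} gives the complementary inequality $w_{\min}(U) \le (n-2d)/3 + n/(2d) + 3/2$, which on our range simplifies to $w_{\min}(U) \le (2+\varepsilon/3)d + 6$. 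I then fix a minimum-weight codeword $u^* = \varphi(\xi^*) \in U$ with $w^* := w(u^*) \in [2d-1,(2+\varepsilon/3)d+6]$, set $S := \mathrm{supp}(u^*)$, and consider the subspace $W := \mathrm{span}\{x_i : i \notin S\} \subseteq \ker \xi^*$.

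I then split on the dimension of $W$. In the degenerate case $\dim W \le d-2$, the annihilator $W^\perp \subseteq V^*$ has dimension $\ge 2$, and for each $i \in S$ the locus $\{\eta \in W^\perp : \eta(x_i)=0\}$ is a codimension-one subspace of $W^\perp$; averaging over $\eta \in W^\perp \setminus \langle \xi^* \rangle$ then produces an $\eta$ with $|\{i \in S : \eta(x_i)=0\}| \ge w^*/3$, and the hyperplane $\ker \eta$ meets $X$ in at least $2(|[n/2]\setminus S| + w^*/3) = n - 4w^*/3$ elements. This exceeds $n-4d+2$ whenever $w^* < 3d - 3/2$, which our upper bound on $w^*$ guarantees as soon as $\varepsilon < 1$, giving an immediate contradiction. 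In the main case $\dim W = d-1$, the puncturing $U'$ of $U$ at the coordinates in $S$ has dimension exactly $d-1$ and length $N' = n/2 - w^*$; for any nonzero $u' = \varphi'(\xi) \in U'$ one has $\xi \notin \langle \xi^* \rangle$, so each $\varphi(\xi+\alpha\xi^*) \in U\setminus\{0\}$ with $\alpha \in \mathbb F_3$ has weight $\ge w^*$, and summing the coordinate-wise identity $\sum_{\alpha} w(\varphi(\xi+\alpha\xi^*)) = 3w(u') + 2w^*$ yields the crucial lower bound $w_{\min}(U') \ge w^*/3$.

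Finally, I apply Proposition \ref{pc} to $U'$ with $k' = \lfloor (w^*/3 - 1)/2 \rfloor$, producing the packing inequality $\binom{N'}{k'} 2^{k'} \le 3^{N'-d+1}$. Setting $\alpha := N/d \in [4,4+\varepsilon/2]$ and $\beta := w^*/d \in [2,2+\varepsilon/3]$ and applying Stirling with the binary entropy $h(p) := -p\log_2 p - (1-p)\log_2(1-p)$, the limit of $\frac{1}{d}\log_2(\mathrm{LHS}/\mathrm{RHS})$ is governed by the continuous function $F(\alpha,\beta) := (\alpha-\beta) h(\beta/(6(\alpha-\beta))) + \beta/6 - (\alpha-\beta-1)\log_2 3$, and a direct computation gives $F(4,2) = 2h(1/6) + 1/3 - \log_2 3 \approx 0.048 > 0$. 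By continuity of $F$, this remains strictly positive throughout a rectangle $[4,4+\varepsilon/2]\times[2,2+\varepsilon/3]$ once $\varepsilon > 0$ is chosen small enough, so for such $\varepsilon$ the packing inequality fails for all large $d$, producing the desired contradiction. The main obstacle I anticipate is exactly this narrow numerical margin of $\approx 0.048$: one must carefully control the floor in $k'$, the polynomial-factor Stirling corrections, and the $\varepsilon$-perturbations in $\alpha$ and $\beta$, which in turn pins down how small the absolute constant $\varepsilon$ must be taken.
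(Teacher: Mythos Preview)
Your argument is correct and lands on exactly the same numerical contradiction as the paper, namely that the packing bound of Proposition~\ref{pc} applied to a $(d-1)$--dimensional code of length $\approx 2d$ and minimum weight $\approx 2d/3$ would force $\binom{2d}{d/3}2^{d/3}\le 3^{d}$, which fails since the left side is $\approx 3.10^{d}$ (equivalently, your $F(4,2)\approx 0.048>0$).

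The route, however, is organised differently. The paper first passes to $Y=X\cap H_{0}$ for a max-intersection hyperplane $H_{0}$, then invokes Lemma~\ref{mb} on $Y$ and bounds $m'=m(|Y|,d-1)$ from above by extending a good $(d-2)$--flat $H_{1}\subset H_{0}$ to a hyperplane $H_{2}$ via pigeonhole. You instead build the code $U$ from $X$ once, take a minimum-weight word $u^{*}=\varphi(\xi^{*})$, puncture at $S=\mathrm{supp}(u^{*})$, and lower-bound $w_{\min}(U')$ by the Plotkin-type identity $\sum_{\alpha\in\mathbb F_{3}}w(\varphi(\xi+\alpha\xi^{*}))=3w(u')+2w^{*}$. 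These two steps are dual: your puncturing is exactly the paper's restriction $X\mapsto X\cap H_{0}$, and your triple-shift identity is the coding-theoretic form of the paper's ``extend $H_{1}$ to $H_{2}$'' pigeonhole, yielding the identical bound $w_{\min}(U')\ge w^{*}/3=(n-m)/6$. What your phrasing buys is self-containment (no appeal to the auxiliary quantity $m'$) and a clean separation of the degenerate case $\dim W\le d-2$; note incidentally that in that case your averaging already gives $|X\cap\ker\eta|\ge n-4w^{*}/3>n-2w^{*}$, contradicting the minimality of $w^{*}$ directly, so the bound $w^{*}<3d-3/2$ is not even needed there. What the paper's phrasing buys is that the degenerate case never appears explicitly, being absorbed into the abstract definition of $m'$.
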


Note that Lemma \ref{tech} implies Theorem \ref{main2}. Indeed, by Lemma \ref{aux1} and Lemma \ref{tech} we have $f(d) \ge (8+\varepsilon)d$. But as we observed during the proof of Proposition \ref{lin}, if there is a $d$-dimensional linear perfect 3-hash code in $\mathbb F_3^n$ then $2n \ge f(d) \ge (8+\varepsilon)d$. So in order to prove Theorem \ref{main2} it is enough to prove Lemma \ref{tech}. 

\smallskip

\begin{proofof}{Lemma \ref{tech}}
Let $\alpha \in [0, 1]$ and denote $n = (4+\alpha)d$.
Let $X \subset \mathbb F_3^d \setminus \{0\}$ be a centrally symmetric set of size $2n$ such that $|H \cap X| < m = m(2n, d)$ for any hyperplane $H$. Arguing indirectly, we assume that $m(2n, d) \le 2n-4d+3$. We are going to derive a contradiction provided that $\alpha$ is small enough.  

Note that by Lemmas \ref{aux1} and \ref{lm} we have
$$
\frac{2n+4d}{3}-13 \le m \le 2n-4d+3,
$$
since $n = (4+\alpha)d$ this simplifies to
\begin{equation}
    m-4d \in \left[\frac{2\alpha}{3}d-5, 8\alpha d+3\right].
\end{equation}

Let $H_0 \subset \mathbb F_3^d$ be a hyperplane such that $|X \cap H_0| = m-1$. The existence of such a hyperplane follows from the minimality of $m$ in the definition of $m(n, d)$.

Let $Y = X \cap H_0$. This is a centrally symmetric subset of a $(d-1)$-dimensional space of size $2n' = m-1$. So if we denote $m' = m(2n', d-1)$ then by Lemma \ref{mb} we have
\begin{equation}\label{mess}
    {n' \choose \lfloor\frac{2n'-m'-1}{4}\rfloor} 2^{\lfloor\frac{2n'-m'-1}{4}\rfloor} \le 3^{n' - d+1}.
\end{equation}
Let $H_1 \subset H_0$ be a hyperplane in $H_0$ such that $|Y \cap H_0|\ge m'$. By the pigeonhole principle there is a hyperplane $H_2 \subset \mathbb F_3^d$ such that $H_2 \cap H_0 = H_1$ and $|H_2 \cap (X \setminus Y)| \ge |X\setminus Y|/3$. We conclude that
\begin{equation}\label{mm}
    m \ge |H_2 \cap X| = |H_2 \cap Y| + |H_2 \cap (X \setminus Y)| \ge |Y| + \frac{2n-|Y|}{3} \ge m' + \frac{2n-m}{3} = \frac{2n + 3m'-m}{3}.
\end{equation}
Now we study (\ref{mess}). After plugging in $n' = (m-1)/2$ and removing floors we get
\begin{equation}\label{ugly}
    {m/2 \choose \frac{m-m'}{4}} 2^{\frac{m-m'}{4}} \le c 3^{m/2 - d},
\end{equation}
for some reasonable constant $c \ge 1$. 

Note that $m' = m(m-1, d-1)$ and so $m' \ge m/3$. Therefore, $x = \frac{m-m'}{4} \le m/6$. The function ${m/2 \choose x}2^x$ is increasing in $x$ on the interval $[0, m/6]$. So if we increase $m'$ then the left hand side in (\ref{ugly}) decreases. Thus, by (\ref{mm}), the bound (\ref{ugly}) holds for $m' = \frac{4m - 2n}{3}$:
\begin{equation}\label{mnm}
    {m/2 \choose \frac{2n - m}{12}} 2^{\frac{2n-m}{12}} \le c 3^{m/2 - d}.
\end{equation}
Denote $m = (4+\gamma)d$ and recall that $\gamma \in [\frac{2\alpha}{3}, 8\alpha]$ (up to $O(\frac{1}{d})$). After some simple manipulations we get
\begin{equation}\label{Oa}
    {(2+O(\alpha))d \choose (\frac{1}{3}+O(\alpha))d} 2^{(\frac{1}{3}+O(\alpha))d} \le 3^{(1 + O(\alpha))d},
\end{equation}
but ${2d \choose d/3} 2^{d/3} \approx 3.10^d$. This implies that there is some $\varepsilon > 0$ such that if $\alpha \le \varepsilon$ then the left hand side of (\ref{Oa}) is, say, at least $3.05^d$ and the right hand side is at most $3.04^d$. This is a contradiction, and so we must have that $n \ge (4+\varepsilon)d$ for some absolute constant $\varepsilon>0$.

\end{proofof}

{\bf{Acknowledgements}}. The first author would like to thank Michail Sarantis and Prasad Tetali for several helpful discussions. We would also like to thank Marco Dalai for bringing reference \cite{BS78} to our attention. Finally, the second author would also like to acknowledge the support of the grant of the Russian Government N 075-15-2019-1926.

\end{document}